\newcommand{\ket}[1]{|#1\rangle}
\newcommand{\proj}[1]{|#1\rangle\langle#1|}
\newcommand{\ba}{\mathbf{a}}
\newcommand{\bx}{\mathbf{x}}
\newcommand{\subs}{\mathcal{S}}
\newcommand{\tr}{\mathrm{tr}}
\newtheorem{theorem}{Theorem}
\newtheorem{lemma}{Lemma}
\newtheorem{corollary}{Corollary}
\begin{document} 
\title{Simulating all non-signalling correlations via classical or quantum theory with negative probabilities}

\author{Sabri W. Al-Safi} \email{S.W.Al-Safi@damtp.cam.ac.uk}
\affiliation{DAMTP, Centre for Mathematical Sciences, Wilberforce Road, Cambridge CB3 0WA, UK}

\author{Anthony J. Short} \email{tony.short@bristol.ac.uk}
\affiliation{H. H. Wills Physics Laboratory, University of Bristol, Tyndall Avenue, Bristol, BS8 1TL, UK}

\begin{abstract}  
Many-party correlations between measurement outcomes in general probabilistic theories are given by conditional probability distributions obeying the non-signalling condition. We show that any such distribution can be obtained from classical or quantum theory, by relaxing positivity constraints on either the mixed state shared by the parties, or the local functions which generate  measurement outcomes. Our results apply to generic non-signalling correlations, but in particular they yield two distinct quasi-classical models for quantum correlations.
\end{abstract} 

\maketitle

\section{Introduction}

Quantum theory predicts many strange phenomena, but non-local correlations are perhaps the most intriguing. On one hand, quantum theory is non-signalling: local measurements made by separate observers on a joint quantum state cannot convey information from one observer to another. However, the outcome statistics of these quantum measurements can correlate in such a way as to defy any classical explanation based on local influences \cite{bell65}.

Stronger non-locality often leads to greater aptitude in information-theoretic tasks \cite{vandam05, linden06, pawlowski09, almeida10}, and consequently much work has been done to understand the non-local power of quantum theory \cite{tsirelson80, brassard06, alsafi11, dahlsten12, pawlowski09.2}. Furthermore, the discovery of non-local correlations which are not achievable in quantum theory, yet still non-signalling \cite{pr94}, has provoked a more general study of non-local correlations and the  general probabilistic theories which generate them \cite{barrett07, short&barrett10, barrett05}.

In this paper, we will explore various models which are capable of generating any non-signalling correlation, yet which closely mirror the structure of classical and quantum theory. The key modification in each case is to drop a positivity requirement on one of the mathematical objects in the theory; this corresponds to allowing outcome probabilities to be negative for some unperformed measurements. Similar models have arisen elsewhere in the study of quantum mechanics - for example in the Wigner phase space representation of quantum states \cite{wigner32}, in analysis of the EPR paradox \cite{feynman82}, and in the context of quantum information theory \cite{ferrie11} - and there have also  been  attempts to provide a physical or mathematical interpretation to the notion of negative probabilities \cite{muckenheim86, khrennikov09}. Our analysis extends to correlations  arising from any non-signalling theory, and may likewise be of benefit in the study of these theories. 

In the context of local models based on classical probability theory, we will show that any non-signalling correlation can be generated if either the joint probability distribution over states, or the local conditional probability distributions associated with measurements, are allowed to be negative. In the former case, this is essentially an alternative and constructive proof of the  result that non-signalling states of a joint system lie in the affine hull of the pure product states \cite{barrett07, degorre11}. In the latter case, the result is more surprising, because only the positivity of the \emph{local} measurements is modified, yet arbitrary \emph{non-local} correlations can be generated.

In the context of quantum theory, Ac\'{i}n \emph{et al} \cite{acin10} have shown that the standard Born trace rule can be extended to generate any non-signalling correlation,  by allowing the quantum state (usually represented by a positive density operator) to be a non-positive operator. We employ our classical framework to give an alternative proof, and a slight extension, of this result. We also derive a dual result that any non-signalling correlation can be generated if the local quantum measurement operators  are allowed to be non-positive. 

 Ac\'{i}n \emph{et al}'s  result has recently been used in proving that any physical theory for which the local structure is identical to that of qubits, and which admits at least one continuous, reversible interaction, must have the global structure specified by quantum theory \cite{masanes12}. Likewise, the ability to represent the correlations of a broad class of theories in a similar way to quantum or classical correlations may provide a powerful tool in analysing such theories, and identifying the unique properties of quantum theory. 

\section{Setup} 

In order to study experimental correlations, an abstract framework is commonly used in which little is assumed about the underlying physics. Consider an experiment involving $N$ systems, in which an observer at system $k$ chooses one of a finite set of possible measurements, indexed by $x_k \in \{1,2,\ldots, X_k\}$, and records one of a finite set of possible outcomes, indexed by $a_k \in \{1,2,\ldots, A_k\}$. We will denote the set of possible values for $k$ by $[N] \equiv\{1,2,\ldots, N\}$. The experiment may be characterized by the conditional probability distribution $p(a_1,\ldots,a_N | x_1, \ldots, x_N)$ on the measurement outcomes, given the inputs. 

 The correlations obtained from such an experiment are \emph{local} if they can be generated by a local classical model, where each system has its own state which individually determines the probability of measurement outcomes on it, and where these states are distributed  according to some joint probability distribution.  Specifically, $p(a_1, \ldots, a_N| x_1, \ldots, x_N)$ is local if there exists a joint probability distribution $p_{\Lambda} (\lambda_1, \ldots ,\lambda_N)$ for the local state $\lambda_k \in \Lambda_k$ of each system $k$ (where each $\Lambda_k$ is a finite set), and a conditional probability distribution $p_k(a_k|x_k, \lambda_k)$ for each $k$, such that 

\begin{equation} \label{localrep}
p(a_1, \ldots, a_N| x_1, \ldots, x_N)= \sum_{\lambda_1, \ldots , \lambda_N}  p_1(a_1|x_1, \lambda_1) \cdots p_N(a_N|x_N, \lambda_N)  p_{\Lambda} (\lambda_1, \ldots, \lambda_N)  
\end{equation}

This definition of locality is equivalent to the more standard requirement that $p(a_1, \ldots, a_N| x_1, \ldots, x_N)$ lies in the convex hull of the product conditional probability distributions $p_1(a_1|x_1)\cdots p_N(a_N|x_N)$ \cite{barrett05}, or that there is a shared state $\lambda$, obtained with probability $p(\lambda)$, which deterministically specifies the outcomes of all local measurements \cite{bell65}. 

The discomfiting truth is that not all experiments on quantum states generate local correlations \cite{bell65}. Instead, the results of quantum experiments are given in general by the formula
\begin{equation}  \label{quantumrep}
p(a_1,\ldots,a_N | x_1, \ldots, x_N) = \tr \left( \left( M^{(1)}_{a_1|x_1} \otimes \cdots \otimes M^{(N)}_{a_N|x_N} \right) \rho \right), 
\end{equation}
where $\rho$ is a density matrix (satisfying $\rho\geq 0$ and $\tr(\rho)=1$), and $M^{(k)}_{a_k|x_k}$ are measurement operators comprising some positive-operator valued measure (POVM) for each $k$ (satisfying $M^{(k)}_{a_k|x_k} \geq 0$ and $\sum_{a_k} M^{(k)}_{a_k|x_k} = I^{(k)}$, the identity on system $k$). 

In section \ref{classresults} we show that all non-signalling correlations can be represented in the form of (\ref{localrep}), as long as either (i) the classical state $p_{\Lambda}$, or (ii) the classical measurement operators $p_k$ are allowed to contain negative components. We then show in section \ref{quantresults} that all non-signalling correlations can be represented in the form of (\ref{quantumrep}), as long as either (i) the quantum state $\rho$, or (ii) the quantum measurement operators $M^{(k)}_{a_k|x_k}$, are allowed to be non-positive operators.

\section{Classical results} \label{classresults}

In order to analyse the classical case, it will be helpful to define an analogue of a probability distribution, in which the entries are allowed to be negative; we will refer to this as a quasiprobability distribution. A function $\tilde{p}: C_1 \times \cdots \times C_k \rightarrow \mathbb{R}$, where $|C_i| < \infty$ for all $i$, is a \emph{joint quasiprobability distribution} if and only if it obeys the normalisation condition

\begin{equation} \label{normalization}
 \sum_{c_1, \ldots , c_N} \tilde{p}(c_1, \ldots, c_N) = 1 
\end{equation}

 (We will only consider joint distributions, and will therefore drop the use of the word `joint'). Similarly, a conditional quasiprobability distribution is a real function $\tilde{p}(c_1, \ldots, c_K | z_1, \ldots, z_L)$ which is a quasiprobability distribution for each fixed choice of $z_1, \ldots, z_L$. Note that any (conditional) quasiprobability distribution which is non-negative for all values of its arguments is also a (conditional) probability  distribution. For clarity we will use tildes throughout to represent quasiprobability distributions.

We wil say that a conditional quasiprobability distribution $\tilde{p}(a_1, \ldots, a_N| x_1, \ldots, x_N)$ is \emph{non-signalling} if, for any $k \in [N]$ and choice of $x_1, \ldots, x_N$, the sum $\sum_{a_k} \tilde{p}(a_1, \ldots, a_N| x_1, \ldots, x_N)$ is independent of the value of $x_k$. Non-signalling distributions have well-defined reduced distributions: for any non-empty subset $ \subs = \{i_1,\ldots ,i_M\} \subseteq [N]$ let $\ba_{\subs}$ and $\bx_{\subs}$ denote the reduced strings $(a_{i_1},\ldots ,a_{i_M})$, $(x_{i_1},\ldots, x_{i_M})$; the \emph{marginal distribution}  may then be defined as

\begin{equation}
\tilde{p}(\ba_{\subs}|\bx_{\subs}) = \sum_{a_i \,:\, i\notin\subs} \tilde{p}(a_1, \ldots, a_N| x_1, \ldots, x_N) 
\end{equation}

\noindent for some arbitrary choice of $x_i$ for $i \notin \subs$.
 
\begin{lemma}  \label{marginal-lemma}
 A non-signalling, conditional quasiprobability distribution $\tilde{p}(a_1, \ldots, a_N| x_1, \ldots, x_N)$ is uniquely characterized by the complete set of marginal distributions $\tilde{p}(\ba_{\subs}|\bx_{\subs})$ for which $a_{i_k} < A_{i_k} \, \forall i_k \in \subs$, where $\subs$ ranges over all subsets of $[N]$.
\end{lemma}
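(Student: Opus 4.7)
The plan is to establish the slightly stronger statement that \emph{every} marginal $\tilde{p}(\ba_\subs|\bx_\subs)$ — not just the full distribution itself — is uniquely determined by the subcollection of marginals satisfying $a_{i_k} < A_{i_k}$ for all $i_k$ in the relevant index set. This strengthening is natural because the reduction step I have in mind will produce marginals over smaller subsets, so the induction must range over all such marginals; the full distribution is simply the $\subs = [N]$ case.

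I would induct on the size of the ``top set'' $T := \{i \in \subs : a_i = A_i\}$. The base case $|T| = 0$ holds trivially: every $a_i$ in $\subs$ satisfies $a_i < A_i$, so $\tilde{p}(\ba_\subs|\bx_\subs)$ is by definition one of the given marginals.

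For the inductive step, suppose $|T| = n \geq 1$ and pick any $k \in T$. Since every marginal of $\tilde{p}$ is itself non-signalling (a quick check using the non-signalling property of the full distribution), the sum $\sum_{a'_k = 1}^{A_k} \tilde{p}(a'_k, \ba_{\subs \setminus \{k\}} | \bx_\subs)$ is independent of $x_k$ and equals $\tilde{p}(\ba_{\subs \setminus \{k\}} | \bx_{\subs \setminus \{k\}})$. Separating off the $a'_k = A_k$ term gives
\begin{equation*}
\tilde{p}(\ba_\subs | \bx_\subs) \;=\; \tilde{p}(\ba_{\subs \setminus \{k\}} | \bx_{\subs \setminus \{k\}}) \;-\; \sum_{a'_k = 1}^{A_k - 1} \tilde{p}(a'_k, \ba_{\subs \setminus \{k\}} | \bx_\subs).
\end{equation*}
Every marginal on the right-hand side has top set $T \setminus \{k\}$, of size $n - 1$: the first term drops index $k$ entirely, while in each summand we have $a'_k < A_k$, so $k$ is no longer in the top set. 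By the inductive hypothesis each is determined by the given marginals, hence so is $\tilde{p}(\ba_\subs|\bx_\subs)$.

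The argument is essentially a direct unwinding of non-signalling combined with normalisation. The only conceptual obstacle I anticipate is realising that the inductive statement must range over all marginals simultaneously rather than just the full distribution, since the reduction identity produces marginals on strict subsets; once this is set up, the remainder is a clean single induction on $|T|$ with no delicate combinatorics.
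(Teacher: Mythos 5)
Your proposal is correct and follows essentially the same route as the paper: the paper also inducts on the number of indices $i_k \in \subs$ with $a_{i_k} = A_{i_k}$, and uses the identical reduction identity (peeling off one such index via non-signalling and normalisation) to express the marginal in terms of marginals with one fewer ``top'' index. The only cosmetic difference is that the paper phrases the induction as showing two distributions with matching restricted marginals agree everywhere, whereas you phrase it as each marginal being determined by the restricted ones; these are equivalent.
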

 
\begin{proof}
This lemma is an explicit statement in terms of quasiprobabilities of a result used in \cite{acin10}. Suppose that $\tilde{p}(a_1, \ldots, a_N| x_1, \ldots, x_N)$ and $\tilde{p}'(a_1, \ldots, a_N| x_1, \ldots, x_N)$ are non-signalling conditional quasiprobability distributions, whose marginals $\tilde{p}(\ba_{\subs}|\bx_{\subs})$ and $\tilde{p}'(\ba_{\subs}|\bx_{\subs})$ agree whenever $a_{i_k} < A_{i_k} \, \forall i_k \in \subs$, for all $\subs$. We argue that $\tilde{p}(\ba_{\subs}|\bx_{\subs})$ and $\tilde{p}'(\ba_{\subs}|\bx_{\subs})$ agree for all choices of $a_i$, by induction on $n = \#\{i_k \in \subs \, : \, a_{i_k} = A_{i_k}\}$. The case $n=0$ holds trivially; for $n>0$, without loss of generality,  suppose that $a_{i_1} = A_{i_1}$, then,
\begin{align}
 \tilde{p}(\ba_{\subs}|\bx_{\subs}) &= \tilde{p}(A_{i_1},\ldots,a_{i_M} | x_{i_1},\ldots,x_{i_M})\\
 &= \tilde{p}(a_{i_2},\ldots,a_{i_M} | x_{i_2},\ldots,x_{i_M}) - \sum_{a_{i_1} < A_{i_1}} \tilde{p}(a_{i_1},\ldots,a_{i_M} | x_{i_1},\ldots,x_{i_M}) \label{defmarg}\\
 &= \tilde{p}'(a_{i_2},\ldots,a_{i_M} | x_{i_2},\ldots,x_{i_M}) - \sum_{a_{i_1} < A_{i_1}} \tilde{p}'(a_{i_1},\ldots,a_{i_M} | x_{i_1},\ldots,x_{i_M}) \label{indhyp}\\
 &= \tilde{p}'(\ba_{\subs}|\bx_{\subs}) .
\end{align}
 Line \eqref{defmarg} follows from the definition of the marginal distribution, and line \eqref{indhyp} follows from applying the induction hypothesis to the $n-1$ case. Setting $\subs = [N] $ then proves the lemma. 
\end{proof}
 
 This lemma tells us that as long as the non-signalling property is obeyed, we can restrict our attention to a subset of all possible measurement outcomes: this is very helpful in proving the following Theorems.

\begin{theorem}(Non-positive classical measurements) An N-partite conditional probability distribution $p(a_1, \ldots, a_N| x_1, \ldots, x_N)$ is non-signalling if and only if it can be represented in the form \label{thm:measurements}
\begin{equation} \label{negativemeasurements} 
p(a_1, \ldots, a_N| x_1, \ldots, x_N)= \sum_{\lambda_1, \ldots, \lambda_N}  \tilde{p}_1(a_1|x_1, \lambda_1) \cdots \tilde{p}_N(a_N|x_N, \lambda_N)  p_{\Lambda} (\lambda_1, \ldots, \lambda_N)  
\end{equation}
where $ p_{\Lambda} (\lambda_1, \ldots, \lambda_N)$ is a probability distribution, and $\tilde{p}_k(a_k|x_k, \lambda_k)$ is a conditional quasiprobability distribution for each $k$. 
\end{theorem}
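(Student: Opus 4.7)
The plan is to prove the two directions separately, with the ``if'' being routine and the ``only if'' requiring an explicit construction.

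For the ``if'' direction, if $p$ has the form \eqref{negativemeasurements}, then summing over $a_k$ yields
\begin{equation*}
\sum_{a_k} p(\ba|\bx) = \sum_{\lambda_1,\ldots,\lambda_N} p_{\Lambda}(\lambda_1,\ldots,\lambda_N) \Bigl(\sum_{a_k}\tilde{p}_k(a_k|x_k,\lambda_k)\Bigr)\prod_{j\neq k}\tilde{p}_j(a_j|x_j,\lambda_j).
\end{equation*}
Since the inner bracketed sum equals $1$ by normalisation of $\tilde{p}_k$, the right hand side is independent of $x_k$, so $p$ is non-signalling.

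For the converse I would give an explicit construction. Set $\Lambda_k = \{1,\ldots,X_k\}\times\{1,\ldots,A_k\}$, write $\lambda_k = (x_k^*,a_k^*)$, and define
\begin{equation*}
p_{\Lambda}\bigl((x_1^*,a_1^*),\ldots,(x_N^*,a_N^*)\bigr) = \frac{p(a_1^*,\ldots,a_N^*\mid x_1^*,\ldots,x_N^*)}{X_1\cdots X_N},
\end{equation*}
which is a genuine probability distribution since $p$ is. Fix any reference probability distribution $u_k$ on $\{1,\ldots,A_k\}$ and set
\begin{equation*}
\tilde{p}_k\bigl(a_k\mid x_k,(x_k^*,a_k^*)\bigr) = X_k\delta_{x_k,x_k^*}\delta_{a_k,a_k^*} + (1 - X_k\delta_{x_k,x_k^*})u_k(a_k),
\end{equation*}
which is a normalised conditional quasiprobability distribution (typically taking some negative values when $x_k=x_k^*$ and $X_k>1$).

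To verify \eqref{negativemeasurements}, I would substitute these definitions and expand $\prod_k \tilde{p}_k$ distributively as a sum over subsets $\subs \subseteq [N]$, where $\subs$ records which parties contribute the delta-function summand. The $\subs=[N]$ term collapses via the delta functions to give $p(\ba|\bx)$ exactly. For any proper subset $\subs \subsetneq [N]$, the inner sum over $a_i^*$ with $i \notin \subs$ applied to $p(\ba^*|\bx^*)$ yields the marginal $p(\ba_{\subs}|\bx_{\subs})$ by non-signalling, which is independent of $x_i^*$ for $i \notin \subs$; the residual factor $\sum_{x_i^*}(1-X_i\delta_{x_i,x_i^*}) = X_i - X_i = 0$ then annihilates each such contribution.

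The main obstacle I anticipate is the bookkeeping of the subset expansion. The coefficient $X_k$ in the definition of $\tilde{p}_k$ is chosen precisely so that $\sum_{x_k^*}(1-X_k\delta_{x_k,x_k^*})=0$; for this cancellation to bite, the marginal over $a_i^*$ for $i\notin\subs$ must be independent of $x_i^*$ for $i\notin\subs$, which is exactly the content of non-signalling. This is conceptually aligned with Lemma~\ref{marginal-lemma}: non-signalling distributions are fully controlled by their marginals over subsets of parties.
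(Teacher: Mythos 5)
Your proposal is correct, and its verification step takes a genuinely different route from the paper's. The paper defines $\tilde{p}_k$ so that $\tilde{p}_k(a_k|x_k,[a_k^*,x_k^*]) = X_k\,\delta_{x_k,x_k^*}\delta_{a_k,a_k^*}$ for $a_k<A_k$, with the normalisation deficit dumped entirely onto the last outcome $a_k=A_k$; it then checks only the marginals with all $a_i<A_i$ (where the delta terms collapse immediately) and invokes Lemma~\ref{marginal-lemma} to conclude that the two non-signalling distributions coincide everywhere. Your construction is a one-parameter-family generalisation of this: choosing $u_k=\delta_{a_k,A_k}$ reproduces the paper's $\tilde p_k$ exactly, while a general reference distribution $u_k$ spreads the deficit over all outcomes. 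More substantively, your verification is direct and self-contained: the subset expansion into delta and residual factors, the observation that summing $p(\ba^*|\bx^*)$ over $a_j^*$ for $j\notin\subs$ gives a marginal independent of $x_j^*$ (this is precisely where non-signalling enters), and the cancellation $\sum_{x_j^*}(1-X_j\delta_{x_j,x_j^*})=0$ kill every term with $\subs\neq[N]$, leaving $p(\ba|\bx)$ from $\subs=[N]$. This avoids any appeal to Lemma~\ref{marginal-lemma} and makes the role of non-signalling explicit in the cancellation; the paper's route buys a shorter computation (only the $\subs$ with all $a_i<A_i$ need checking) at the cost of relying on the lemma's inductive uniqueness argument. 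The only caveat is notational: the inner sum over $a_j^*$, $j\notin\subs$, yields $p(\ba^*_{\subs}|\bx^*_{\subs})$ in the starred variables, which the deltas for $i\in\subs$ then pin to $p(\ba_{\subs}|\bx_{\subs})$ — your write-up elides this intermediate step but the logic is sound.
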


\begin{proof} Firstly, note that as the $\tilde{p}_k(a_k|x_k, \lambda_k)$ are conditional quasiprobability distributions, they must satisfy $\sum_{a_k} \tilde{p}_k(a_k|x_k, \lambda_k) =1$ (independent of $x_k$). It follows that the distribution given by (\ref{negativemeasurements}) is non-signalling. Conversely, to prove that all non-signalling distributions $\tilde{p}$ can be written in the form of \eqref{negativemeasurements}, we take each $\Lambda_k$ to be the set of ordered pairs $[a'_k, x'_k]$ consisting of the allowed measurement choices and outputs for system $k$. We then set
\begin{eqnarray}
p_{\Lambda}([a'_1, x'_1], \ldots,[a'_N, x'_N]) &=& \frac{p(a'_1, \ldots, a'_N | x'_1, \ldots, x'_N)}{X_1 X_2 \cdots X_N} \label{conds_1} \label{eq:proof1_1} \label{rdef1} \\
\tilde{p}_k (a_k|x_k, \lambda_k)  &=& \left\{ \begin{array}{ll} X_k \delta_{\lambda_k, [a_k,x_k]} &\qquad \textrm{if} \qquad a_k<A_k  \vspace{0.2cm}\\
1 -  \sum_{a<A_k} X_k  \delta_{\lambda_k ,[a,x_k]} &\qquad \textrm{if} \qquad \label{conds_2}
a_k=A_k \end{array} \right. \label{eq:proof1_2}
\end{eqnarray}

With these assignments, $p_{\Lambda}(\lambda_1, \ldots ,\lambda_N)$ is a probability distribution and $\tilde{p}_k(a_k|x_k, \lambda_k)$ is a conditional quasiprobability distribution. However, $\tilde{p}_k (A_k |x_k, [1,x_k]) = 1-X_k < 0$ whenever $X_k > 1$, hence $\tilde{p}_k(a_k|x_k, \lambda_k)$ will not usually be a valid conditional probability distribution. \smallskip

It remains to show that the given values for $p_{\Lambda}(\lambda_1, \ldots ,\lambda_N)$ and $\tilde{p}_k(a_k|x_k, \lambda_k)$ satisfy \eqref{negativemeasurements}. Consider the quasiprobability distribution $p'$ given by 
\begin{equation} \label{p'def}
p'(a_1, \ldots, a_N| x_1, \ldots, x_N) = \sum_{\lambda_1, \ldots, \lambda_N}  \tilde{p}_1(a_1|x_1, \lambda_1) \cdots \tilde{p}_N(a_N|x_N, \lambda_N)  p_{\Lambda}(\lambda_1, \ldots, \lambda_N) .
\end{equation}
It is straightforward to check that $p'$ is non-signalling and that $p(\ba_{\subs}|\bx_{\subs}) = p'(\ba_{\subs}|\bx_{\subs})$ for all subsets $\subs \subseteq [N]$, and strings $\ba_{\subs}$ and $\bx_{\subs}$ with $a_i < A_i \,\,\forall i \in \subs$. For example, when $\subs = \{1,2\}$ we have
\begin{eqnarray} 
p'(a_1, a_2| x_1, x_2) &=& \sum_{\lambda_1, \ldots, \lambda_N}X_1  \delta_{\lambda_1, [a_1,x_1]} X_2 \delta_{\lambda_2, [a_2,x_2]}  p_{\Lambda}(\lambda_1, \ldots, \lambda_N). \nonumber \\
&=& X_1 X_2 \frac{p(a_1, a_2| x_1, x_2)}{X_1 X_2} = p(a_1, a_2| x_1, x_2) .
\end{eqnarray} 

\noindent By lemma 1, this is enough to conclude that $p'$ is indeed the probability distribution $p$, hence \eqref{negativemeasurements} holds.
\end{proof}

\begin{theorem} (Non-positive classical states) An N-partite conditional probability distribution $p(a_1, \ldots, a_N| x_1, \ldots, x_N)$ is non-signalling if and only if it can be represented in the form \label{thm:states}
\begin{equation} \label{negativestates} 
p(a_1, \ldots, a_N| x_1, \ldots, x_N)= \sum_{\lambda_1, \ldots, \lambda_N}  p_1(a_1|x_1, \lambda_1) \cdots p_N(a_N|x_N, \lambda_N)  \tilde{p}_{\Lambda} (\lambda_1, \ldots, \lambda_N)  
\end{equation}
where $ \tilde{p}_{\Lambda} (\lambda_1, \ldots, \lambda_N)$ is a quasiprobability distribution, and  $p_k(a_k|x_k, \lambda_k)$ is a conditional probability distribution for each $k$. 
\end{theorem}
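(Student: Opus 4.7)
The forward direction is immediate: since $\sum_{a_k}p_k(a_k|x_k,\lambda_k)=1$ independent of $x_k$, summing (\ref{negativestates}) over $a_k$ kills the $x_k$ dependence, so $p$ is non-signalling. For the reverse direction, my plan is to leverage Theorem~\ref{thm:measurements} together with a local affine decomposition that pushes the negativity from the local measurements onto the global state. Specifically, I would take each $\lambda_k$ to index a deterministic response function $f_k:\{1,\ldots,X_k\}\to\{1,\ldots,A_k\}$ and set $p_k(a_k|x_k,f_k)=\delta_{a_k,f_k(x_k)}$, which is manifestly a valid conditional probability distribution.

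The key technical lemma is that any single-party conditional quasiprobability $\tilde{p}(a|x)$ on $\{1,\ldots,A\}\times\{1,\ldots,X\}$ admits an affine decomposition into deterministic responses,
\begin{equation*}
\tilde{p}(a|x)=\sum_{f:\{1,\ldots,X\}\to\{1,\ldots,A\}}c_f\,\delta_{a,f(x)},\qquad \sum_f c_f=1.
\end{equation*}
I would verify this using the explicit formula $c_f=\prod_{x=1}^{X}\tilde{p}(f(x)|x)$: fixing $x_0,a_0$ and isolating $f(x_0)=a_0$ gives $\sum_{f:f(x_0)=a_0}c_f=\tilde{p}(a_0|x_0)\prod_{x\ne x_0}\sum_a\tilde{p}(a|x)=\tilde{p}(a_0|x_0)$, where each auxiliary sum collapses to unity by the quasiprobability normalization. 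The identity $\sum_f c_f=1$ follows by the same cancellation.

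With this lemma in hand, I would start from Theorem~\ref{thm:measurements}'s decomposition $p=\sum_\lambda\prod_k\tilde{p}_k(a_k|x_k,\lambda_k)\,p_\Lambda(\lambda)$, and for each $k,\lambda_k$ expand $\tilde{p}_k(\cdot|\cdot,\lambda_k)$ in the deterministic-response basis with coefficients $c_{f_k}(\lambda_k)=\prod_x\tilde{p}_k(f_k(x)|x,\lambda_k)$. Interchanging sums then yields (\ref{negativestates}) with
\begin{equation*}
\tilde{p}_\Lambda(f_1,\ldots,f_N)=\sum_{\lambda_1,\ldots,\lambda_N}\Bigl(\prod_{k=1}^{N}c_{f_k}(\lambda_k)\Bigr)p_\Lambda(\lambda_1,\ldots,\lambda_N),
\end{equation*}
and normalization $\sum_{f_1,\ldots,f_N}\tilde{p}_\Lambda=\sum_\lambda p_\Lambda(\lambda)\prod_k\sum_{f_k}c_{f_k}(\lambda_k)=1$ is immediate from $\sum_{f_k}c_{f_k}(\lambda_k)=1$. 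The main (mild) obstacle is the affine-decomposition identity itself, which hinges on the $X_k$-fold cancellation via normalization of $\tilde{p}_k$; the rest is routine bookkeeping.
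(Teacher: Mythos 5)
Your argument is correct, but it takes a genuinely different route from the paper. The paper's proof is a direct construction: it takes $\Lambda_k$ to be the set of pairs $[a_k,x_k]$ together with one extra symbol $\xi_k$, defines $\tilde{p}_\Lambda$ explicitly in terms of the marginals of $p$ (with coefficients $\prod_{i\notin\subs}(1-X_i)$), uses deterministic response distributions, and verifies the identity by checking that the marginals with $a_i<A_i$ agree and invoking Lemma~\ref{marginal-lemma}. You instead bootstrap from Theorem~\ref{thm:measurements} and push the negativity from the local measurements into the state via a single-party affine decomposition of any conditional quasiprobability distribution into deterministic response functions, $\tilde{p}(a|x)=\sum_f c_f\,\delta_{a,f(x)}$ with $c_f=\prod_x\tilde{p}(f(x)|x)$; your verification of this identity (factorizing the sum over $f$ with $f(x_0)=a_0$ and collapsing each factor to $1$ by normalization) is sound, as is the interchange of sums and the normalization of the resulting $\tilde{p}_\Lambda$. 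What your approach buys is conceptual transparency: it exhibits $p$ directly as an affine combination of local deterministic strategies, which is exactly the known form of the result the paper cites (\cite{barrett07,degorre11}), and it avoids a second appeal to Lemma~\ref{marginal-lemma}. What it costs is efficiency: your hidden-variable space has $\prod_k A_k^{X_k}$ elements (exponential in the number of measurement settings), whereas the paper's construction uses only $\prod_k(A_kX_k+1)$, reducible to $\prod_k((A_k-1)X_k+1)$, which matches the parameter count of the non-signalling polytope and is the point of the efficiency discussion at the end of the paper. Also note that your proof makes Theorem~\ref{thm:states} logically dependent on Theorem~\ref{thm:measurements}, whereas the paper's two constructions are independent.
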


\begin{proof} 
As each $p_k(a_k|x_k, \lambda_k)$ is a conditional probability distribution, it is clear that summing over $a_k$ removes any dependence on $x_k$ on the right-hand side, hence the distribution $p(a_1, \ldots, a_N| x_1, \ldots, x_N)$ is non-signalling. It remains to be shown that we can represent any non-signalling distribution in the form of (\ref{negativestates}). To achieve this, we take each $\Lambda_k$ to be the set of ordered pairs $[a'_k, x'_k]$ as before, along with an additional value which we will refer to as $\xi_k$, so that there are $A_k \cdot X_k +1$ possible choices of $\lambda_k$ for each $k$. For a string $(\lambda_1,\ldots ,\lambda_N) \in \Lambda_1 \times \cdots \times \Lambda_N$, let $\subs = \{i\in [N] \,:\, \lambda_i \neq \xi_i \}$ (i.e.  the set of indices $i$ for which $\lambda_i \neq \xi_i$) and define
\begin{equation} \label{rdef2}
 \tilde{p}_{\Lambda}(\lambda_1,\ldots ,\lambda_N) = \left[\Pi_{i \notin\subs} (1-X_i) \right] p(\ba_{\subs} | \bx_{\subs})
\end{equation}
so that, for example,
\begin{eqnarray}
\tilde{p}_{\Lambda}([a'_1, x'_1], [a'_2, x'_2], \ldots, [a'_N, x'_N]) &=& p(a'_1, \ldots, a'_N | x'_1, \ldots, x'_N)  \\ 
\tilde{p}_{\Lambda}(\xi_1, [a'_2, x'_2], \ldots, [a'_N, x'_N]) &=& (1-X_1) p(a'_2, \ldots, a'_N | x'_2, \ldots, x'_N)   \\
 \vdots &=& \vdots  \\
\tilde{p}_{\Lambda}(\xi_1, \xi_2, \ldots, \xi_N) &=& (1-X_1) (1-X_2) \cdots (1-X_N) . 
 \end{eqnarray}

To show that this is a quasiprobability distribution, note that

\begin{align}
 \sum_{\lambda_1 , \ldots , \lambda_N} \tilde{p}_{\Lambda}(\lambda_1,\ldots ,\lambda_N) &= \sum_{\subs \subseteq [N]} \sum_{\ba_{\subs},  \bx_{\subs}} \left[\Pi_{i \notin\subs} (1-X_i) \right] p(\ba_{\subs} | \bx_{\subs}) \\
&= \sum_{\subs \subseteq [N]} \left[ \Pi_{i \notin\subs} (1-X_i)\right] \left[ \Pi_{j \in \subs} X_j \right] \label{binexp} \\
&= \Pi_{i \in [N]} \left(X_i + (1-X_i)\right) = 1 .
\end{align}

The $p_k (a_k|x_k, \lambda_k)$ are defined to be the following conditional probability distributions:
\begin{equation} 
p_k (a_k|x_k, \lambda_k) = \left\{ \begin{array}{ll}  \delta_{\lambda_k, [a_k,x_k]} &\qquad \textrm{if} \qquad a_k<A_k  \vspace{0.2cm}\\
1 -  \sum_{a<A_k}  \delta_{\lambda_k, [a,x_k]} &\qquad \textrm{if} \qquad \label{conds_2x}
a_k=A_k \end{array} \right. .
\end{equation}

As in the previous case, it is a straightforward check that substituting (\ref{rdef2}) and (\ref{conds_2x}) into (\ref{p'def}) gives a non-signalling conditional quasiprobability distribution whose marginals agree with all the marginals of the distribution $p(a_1, \ldots, a_N| x_1, \ldots, x_N)$ in the case $a_i < A_i \,\,\forall i$. Hence, again by lemma 1 the initial probability distribution is recovered. Note that the measurement probabilites $p_k (a_k|x_k, \lambda_k)$ always equal 0 or 1 in this case, hence the measurements are in fact deterministic. 
\end{proof}
 
\section{Quantum results} \label{quantresults}

By a careful construction of states and measurements, our classical results immediately imply quantum corollaries. In particular, we can use Theorem \ref{thm:measurements} to prove 

\begin{corollary} (Non-positive quantum measurements) An N-partite conditional probability distribution $p(a_1, \ldots, a_N| x_1, \ldots, x_N)$ is non-signalling if and only if it can be represented in the form 
\begin{equation} \label{eq:cor1}
p(a_1, \ldots, a_N| x_1, \ldots, x_N)=  \tr \left( \left( \tilde{M}^{(1)}_{a_1|x_1} \otimes \cdots \otimes \tilde{M}^{(N)}_{a_N|x_N} \right) \rho \right), 
\end{equation}
where $\tilde{M}^{(k)}_{a_k|x_k}$ are Hermitian operators satisfying $\sum_{a_k} \tilde{M}^{(k)}_{a_k|x_k} = I$ for each $k$, and $\rho$ is a density operator (satisfying $\rho \geq 0, \tr(\rho)=1$).  Furthermore, this representation can be chosen such that the operators  $\tilde{M}^{(1)}_{a_1|x_1} \otimes \cdots \otimes \tilde{M}^{(N)}_{a_N|x_N}$ and $\rho$ all commute. 
\end{corollary}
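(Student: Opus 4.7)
The plan is to derive the corollary as a straightforward diagonal embedding of Theorem \ref{thm:measurements}. The forward direction (representability implies non-signalling) is immediate: summing the right-hand side of \eqref{eq:cor1} over $a_k$ gives $\sum_{a_k} \tilde{M}^{(k)}_{a_k|x_k} = I$ in the $k$-th tensor slot, and cyclicity of the trace removes any dependence on $x_k$. So the work is in the converse.

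For the converse, I would begin by applying Theorem \ref{thm:measurements} to obtain a decomposition
\begin{equation}
p(a_1,\ldots,a_N|x_1,\ldots,x_N) = \sum_{\lambda_1,\ldots,\lambda_N} \tilde{p}_1(a_1|x_1,\lambda_1)\cdots \tilde{p}_N(a_N|x_N,\lambda_N)\, p_{\Lambda}(\lambda_1,\ldots,\lambda_N),
\end{equation}
where $p_\Lambda$ is a genuine probability distribution and each $\tilde{p}_k$ is a conditional quasiprobability distribution. For each system $k$, I introduce a Hilbert space $\mathcal{H}_k$ with orthonormal basis $\{\ket{\lambda_k}\}_{\lambda_k \in \Lambda_k}$ and set
\begin{align}
\tilde{M}^{(k)}_{a_k|x_k} &= \sum_{\lambda_k} \tilde{p}_k(a_k|x_k,\lambda_k)\, \proj{\lambda_k}, \\
\rho &= \sum_{\lambda_1,\ldots,\lambda_N} p_{\Lambda}(\lambda_1,\ldots,\lambda_N)\, \proj{\lambda_1} \otimes \cdots \otimes \proj{\lambda_N}.
\end{align}

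The three properties to verify are then routine. First, $\tilde{M}^{(k)}_{a_k|x_k}$ is Hermitian (diagonal with real eigenvalues), and $\sum_{a_k} \tilde{M}^{(k)}_{a_k|x_k} = \sum_{\lambda_k} \proj{\lambda_k} = I$ because $\tilde{p}_k(\cdot|x_k,\lambda_k)$ is normalised. Second, because $p_\Lambda$ is a bona fide probability distribution, $\rho$ is a convex combination of product pure states and is therefore a valid density operator. Third, expanding the trace in the product computational basis yields exactly the classical sum above, so \eqref{eq:cor1} holds. The commutativity claim comes for free: every operator constructed is diagonal in the basis $\{\ket{\lambda_1}\otimes\cdots\otimes\ket{\lambda_N}\}$, so $\rho$ and all products $\tilde{M}^{(1)}_{a_1|x_1}\otimes\cdots\otimes\tilde{M}^{(N)}_{a_N|x_N}$ commute pairwise.

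There is no real obstacle here; the only subtlety worth flagging is that Theorem \ref{thm:measurements} is what ensures $p_\Lambda$ is an \emph{honest} probability distribution, which is precisely what is needed for $\rho \geq 0$. Had the negativity been pushed onto the state (as in Theorem \ref{thm:states}), this particular embedding would instead give a non-positive $\rho$, corresponding to the dual quantum corollary rather than the one stated here.
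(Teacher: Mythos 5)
Your proposal is correct and follows essentially the same route as the paper: both apply Theorem \ref{thm:measurements} and embed the resulting classical decomposition diagonally, taking $\rho$ to be the diagonal density operator built from $p_{\Lambda}$ and $\tilde{M}^{(k)}_{a_k|x_k}$ to be the diagonal operator built from $\tilde{p}_k$, with commutativity following because everything is diagonal in the product basis $\{\ket{\lambda_1}\otimes\cdots\otimes\ket{\lambda_N}\}$. The only cosmetic difference is that the paper cites the explicit formulae \eqref{eq:proof1_1} and \eqref{eq:proof1_2} from the proof of Theorem \ref{thm:measurements}, whereas you use the theorem as a black box; both are fine.
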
 

\begin{proof} 
Summing the right-hand side of (\ref{eq:cor1}) over $a_k$, it is clear that any $p$ which can be represented in this way is non-signalling. To prove the converse, we use the results of Theorem \ref{thm:measurements}. To each system $k$, we assign a Hilbert space spanned by the orthonormal basis $\{ \ket{\lambda_k} | \lambda_k \in \Lambda_k  \}$, where $\Lambda_k$ is the set of all ordered pairs $[a_k, x_k]$. We then take 
\begin{eqnarray} 
\rho &=& \sum_{\lambda_1, \ldots, \lambda_N} p_{\Lambda}(\lambda_1, \ldots ,\lambda_N)\; \proj{\lambda_1} \otimes \proj{\lambda_2}\otimes \cdots \otimes \proj{\lambda_N}, \\
\tilde{M}^{(k)}_{a_k|x_k} &=& \sum_{\lambda_k} \tilde{p}_k (a_k|x_k, \lambda_k)\, \proj{\lambda_k}, 
\end{eqnarray} 
where $ p_{\Lambda}(\lambda_1, \ldots ,\lambda_N)$ and $\tilde{p}_k (a_k|x_k, \lambda_k)$ are given by (\ref{eq:proof1_1}) and (\ref{eq:proof1_2}) respectively. Substituting this choice of operators into \eqref{eq:cor1} leads to exactly the same equation as \eqref{negativemeasurements}: therefore the result follows directly from Theorem \ref{thm:measurements}.
\end{proof} 

Similarly, from Theorem \ref{thm:states} we may obtain a re-derivation of Ac\'{i}n \emph{et al}'s result \cite{acin10}, with the slight strengthening that all the operators may be chosen to commute.

\begin{corollary} (Non-positive quantum states) An N-partite conditional probability distribution $p(a_1, \ldots, a_N| x_1, \ldots, x_N)$ is non-signalling if and only if it can be represented in the form 
\begin{equation} \label{eq:cor2}
p(a_1, \ldots, a_N| x_1, \ldots, x_N)=  \tr \left( \left( M^{(1)}_{a_1|x_1} \otimes \cdots \otimes M^{(N)}_{a_N|x_N} \right) \tilde{\rho} \right), 
\end{equation}
where $M^{(k)}_{a_k|x_k}$ are POVM elements (positive operators satisfying $\sum_{a_k} M^{(k)}_{a_k|x_k} = I$) for each $k$, and $\tilde{\rho}$ is a Hemitian operator satisfying $\tr(\rho)=1$.  Furthermore this representation can be chosen such that the operators  $M^{(1)}_{a_1|x_1} \otimes \cdots \otimes M^{(N)}_{a_N|x_N}$ and $\tilde{\rho}$ all commute. 
\end{corollary}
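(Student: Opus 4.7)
The plan is to mirror the proof of Corollary 1, swapping the roles of states and measurements, and invoking Theorem \ref{thm:states} in place of Theorem \ref{thm:measurements}. The ``if'' direction is immediate: summing the right-hand side of \eqref{eq:cor2} over $a_k$ and using $\sum_{a_k} M^{(k)}_{a_k|x_k} = I$ produces a trace expression independent of $x_k$, so non-signalling holds for any distribution of the stated form.

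For the converse, I would start from the classical decomposition \eqref{negativestates} guaranteed by Theorem \ref{thm:states}, which supplies genuine conditional probability distributions $p_k(a_k|x_k,\lambda_k)$ together with a quasiprobability distribution $\tilde{p}_{\Lambda}(\lambda_1,\ldots,\lambda_N)$. To each system $k$ I assign a Hilbert space with orthonormal basis $\{ \ket{\lambda_k} : \lambda_k \in \Lambda_k \}$, where $\Lambda_k$ is the enlarged hidden-variable set from the proof of Theorem \ref{thm:states} (including the distinguished symbol $\xi_k$). I then define
\begin{eqnarray*}
\tilde{\rho} &=& \sum_{\lambda_1, \ldots, \lambda_N} \tilde{p}_{\Lambda}(\lambda_1,\ldots,\lambda_N)\; \proj{\lambda_1} \otimes \cdots \otimes \proj{\lambda_N}, \\
M^{(k)}_{a_k|x_k} &=& \sum_{\lambda_k} p_k(a_k|x_k,\lambda_k)\, \proj{\lambda_k}.
\end{eqnarray*}

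The checks are then structural. Because each $p_k(\cdot|x_k,\lambda_k)$ is an ordinary probability distribution (indeed a deterministic one, by the proof of Theorem \ref{thm:states}), $M^{(k)}_{a_k|x_k} \geq 0$, and $\sum_{a_k} M^{(k)}_{a_k|x_k} = \sum_{\lambda_k} \proj{\lambda_k} = I$, so the $\{M^{(k)}_{a_k|x_k}\}_{a_k}$ form a valid POVM for each $x_k$. The operator $\tilde{\rho}$ is Hermitian and satisfies $\tr(\tilde{\rho}) = \sum_{\lambda_1,\ldots,\lambda_N} \tilde{p}_{\Lambda}(\lambda_1,\ldots,\lambda_N) = 1$, but generally not positive, reflecting the fact that $\tilde{p}_{\Lambda}$ need not be non-negative. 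Substituting these choices into \eqref{eq:cor2}, the orthonormality of the basis $\{\ket{\lambda_k}\}$ on each factor collapses the trace to exactly the sum appearing on the right-hand side of \eqref{negativestates}, which by Theorem \ref{thm:states} equals $p(a_1,\ldots,a_N|x_1,\ldots,x_N)$. For the commutativity claim, observe that $\tilde{\rho}$ and every $M^{(k)}_{a_k|x_k}$ are diagonal in the common product basis $\{\ket{\lambda_1}\otimes\cdots\otimes\ket{\lambda_N}\}$, so all operators of the form $M^{(1)}_{a_1|x_1}\otimes\cdots\otimes M^{(N)}_{a_N|x_N}$ mutually commute with $\tilde{\rho}$ and with one another.

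The argument is essentially bookkeeping, dual to Corollary 1. The only point requiring attention is making sure the positivity falls on the correct side of the trace: here $\tilde{p}_{\Lambda}$ absorbs all the negativity, leaving the $M^{(k)}_{a_k|x_k}$ manifestly positive, which is exactly the opposite of the situation in the previous corollary. No new idea beyond this dualization is needed; the slight strengthening over \cite{acin10} (common diagonalizability) comes for free from the diagonal structure of the construction.
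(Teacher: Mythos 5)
Your proposal is correct and follows essentially the same route as the paper: the "if" direction by summing over $a_k$, and the converse by diagonally embedding the construction of Theorem \ref{thm:states} (with the enlarged hidden-variable sets including $\xi_k$) into a product basis, with commutativity following from the common diagonal structure. The additional checks you spell out (positivity of the $M^{(k)}_{a_k|x_k}$, unit trace and Hermiticity of $\tilde{\rho}$) are exactly the ones the paper leaves implicit.
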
 

\begin{proof} 
Again we see that any such $p$ is non-signalling by summing the right-hand side of \eqref{eq:cor2} over $a_k$. To prove the converse, we assign a Hilbert space spanned by the orhonormal basis $\{ \ket{\lambda_k} | \lambda_k \in \Lambda_k  \}$ to  each system $k$, where $\Lambda_k$ is the set of all ordered pairs $[a_k, x_k]$ and the extra element $\xi_k$  (as in the proof of Theorem 2). We then take 
\begin{eqnarray} 
\tilde{\rho} &=& \sum_{\lambda_1, \ldots, \lambda_N} \tilde{p}_{\Lambda}(\lambda_1, \ldots ,\lambda_N)\; \proj{\lambda_1} \otimes \proj{\lambda_2}\otimes \cdots \otimes \proj{\lambda_N}, \\
M^{(k)}_{a_k|x_k} &=& \sum_{\lambda_k} p_k (a_k|x_k, \lambda_k)\, \proj{\lambda_k}, 
\end{eqnarray} 
where $ \tilde{p}_{\Lambda}(\lambda_1, \ldots ,\lambda_N)$ and $p_k (a_k|x_k, \lambda_k)$ are given by (\ref{rdef2}) and (\ref{conds_2x}) respectively. Substituting these into (\ref{eq:cor2}) and applying Theorem \ref{thm:states} proves the corollary.
\end{proof} 

In both these theorems, the set of quantum correlations can be recovered by adding the requirement that the operators $\tilde{M}^{(k)}_{a_k|x_k}$ or $\tilde{\rho}$ are positive respectively. 

\section{Discussion}

 One way of viewing the mixture of states $p_{\Lambda}$ of a local distribution \eqref{localrep} is as a result of one's ignorance of the particular value of $(\lambda_1, \ldots, \lambda_N)$. From this perspective, {the results of Theorem \ref{thm:measurements} are particularly surprising: in this case the state $p_{\Lambda}(\lambda_1, \ldots, \lambda_N)$ is a standard mixture of product states (which one would normally think of as `local'), and all the measurement distributions $\tilde{p}_k (a_k |x_k, \lambda_k)$ are local objects. One might therefore wonder how this can generate non-local correlations at all. The explanation is that the measurements do not yield positive outcome probabilities for each component of the mixed state. Hence we can no longer think of the state as an ignorance mixture of allowed local states, but rather as a non-local object itself. 

As it stands, there is no discernible difference in using our procedures to generate a local correlation, compared with using them to generate a non-local correlation. In all cases the resulting distributions or operators  will contain negative components, as long as at least one party has more than one measurement choice. It would be interesting to see whether there exists a procedure similar to ours which outputs genuine probability distributions when generating local correlations. The question also arises whether quantum correlations might have some special status when represented in the form of classical quasiprobability representations as in (\ref{negativemeasurements}) and (\ref{negativestates}): perhaps, for example, the negative values in the distributions can be bounded if the correlation is quantum. However, this seems likely to be difficult, given the apparent difficulty of finding simple conditions for the quantum realizability of correlations\cite{navascues08}.

It is interesting to examine the efficiency of our representations. Note that the number of hidden variables we use is $\Pi_{k=1}^N A_k X_k$ in Theorem  \ref{thm:measurements} and  $\Pi_{k=1}^N  (A_k X_k + 1)$ in  Theorem \ref{thm:states}.  This can be reduced by noting that whenever $\lambda_k$ is equal to either $\xi_k$ or $[A_k, x_k]$, each local conditional probability distribution takes the same form: $p(a_k|x_k, \lambda_k) = \delta_{a_k, A_k}$. Hence this specific set of states can be combined into a single state $\eta_k$. To preserve normalization, whenever $\lambda_k = \eta_k$ the probabilities \eqref{rdef1} and \eqref{rdef2} must be summed over all the combined states. We do not use this compression of the state space in the main presentation of the theorems for clarity. However, this would reduce the total number of states in both classical theorems to  $\Pi_{k=1}^N  ( (A_k-1) X_k + 1)$, which is the same as the number of real parameters used to specify a non-signalling probability distribution via its marginals in Lemma  \ref{marginal-lemma} (in practice, one less parameter is needed due to normalisation).  In the quantum case, our representations use a quadratically larger Hilbert space than that of \cite{acin10}, with the payoff that all the operators commute.

We have shown that by taking either classical probability theory or quantum theory as a starting point, and by relaxing a positivity constraint on \emph{either}  the statistics of measurement outcomes, \emph{or}  the probabilistic mixing of states, one generates all non-signalling correlations. It would be interesting to investigate whether other general probabilistic theories \cite{barrett05} can be modified in a similar manner to yield all non-signalling correlations, or whether this is particular to theories which, like quantum theory, contain classical theory as a special case.

\bigskip 
 \textbf{Acknowledgments.} AJS acknowledges support from the Royal Society. SWA is funded by an EPSRC grant, and thanks Paul Skrzypczyk and James Yearsley for helpful discussions.

\end{document}